\documentclass[12pt]{iopart}

\usepackage{dcolumn}
\usepackage{bm}
\usepackage{verbatim}       

\usepackage[dvips]{graphicx}
\usepackage{amsthm}
\usepackage{amssymb}
\usepackage{bm}
\usepackage{amstext}
\usepackage{psfrag}


\expandafter\let\csname equation*\endcsname\relax

\expandafter\let\csname endequation*\endcsname\relax

\usepackage{amsmath}
\usepackage{amsfonts}
\usepackage{mathrsfs}
\usepackage{cite}

\setcounter{tocdepth}{2}

\newcommand{\dd}{{\rm d}}

\newcommand{\bd}{\begin{definition}}                
\newcommand{\ed}{\end{definition}}                  
\newcommand{\bc}{\begin{corollary}}                 
\newcommand{\ec}{\end{corollary}}                   
\newcommand{\bl}{\begin{lemma}}                     
\newcommand{\el}{\end{lemma}}                       
\newcommand{\bp}{\begin{proposition}}            
\newcommand{\ep}{\end{proposition}}                
\newcommand{\bere}{\begin{remark}}                  
\newcommand{\ere}{\end{remark}}                     

\newcommand{\bt}{\begin{theorem}}
\newcommand{\et}{\end{theorem}}

\newcommand{\bit}{\begin{itemize}}
\newcommand{\eit}{\end{itemize}}
\newtheorem{theorem}{Theorem}[section]
\newtheorem{corollary}[theorem]{Corollary}
\newtheorem{lemma}[theorem]{Lemma}
\newtheorem{proposition}[theorem]{Proposition}
\theoremstyle{definition}
\newtheorem{definition}[theorem]{Definition}
\theoremstyle{remark}
\newtheorem{remark}[theorem]{Remark}


\begin{document}

\title{On the existence of smooth Cauchy steep time functions}


\author{E Minguzzi}
\address{Dipartimento di Matematica e Informatica ``U. Dini'', Universit\`a
degli Studi di Firenze, Via S. Marta 3,  I-50139 Firenze, Italy.}

\ead{ettore.minguzzi@unifi.it}


\date{}

\begin{abstract}
\noindent A simple proof is given that every globally hyperbolic spacetime admits a  smooth Cauchy steep time function. This result is useful in order to show that globally hyperbolic spacetimes can be isometrically embedded in Minkowski spacetimes and that they spit as a product. The proof is based on a recent result on the differentiability of Geroch's volume functions.
\end{abstract}

\section{Introduction}
Let $(M,g)$ be a spacetime, namely a Hausdorff, paracompact,
connected, time-oriented Lorentzian manifold, endowed with a ($C^3$)
metric $g$ of signature $(-,+,$ $\cdots,+)$. Since every $C^r$
manifold is $C^r$ diffeomorphic to a $C^\infty$ manifold
\cite{hirsch76}, we shall not bother with the  the degree of
differentiability of $M$, as we can suppose to have been given a
smooth atlas.

On a spacetime a {\em time function} $\tau:M\to \mathbb{R}$ is a
continuous function which increases over every causal curve, that
is, $x<y \Rightarrow f(x)<f(y)$, where as usual, we write $x<y$ if
there is a future-directed causal curve connecting $x$ to $y$. A
{\em Cauchy time function} is a time function for which the level
sets are Cauchy hypersurfaces, namely closed acausal  sets
intersected exactly once by any inextendible causal curve. A time
function $\tau$ is {\em steep} if it is continuously differentiable
and $-g(\nabla\tau,\nabla \tau)\ge 1$.

The existence of smooth Cauchy time functions in globally hyperbolic
spacetimes has been proved through different approaches in
\cite{bernal03,fathi12,chrusciel13}. It implies  that these
spacetimes admit a smooth splitting as a product $\mathbb{R}\times
S$, $g=-\beta^2\dd t^2\oplus h_t$, $\beta:M\to (0,+\infty)$, where
$t$ is the Cauchy time function and $h_t$ is a time dependent
Riemannian metric over the Cauchy hypersurface $\{t\}\times S$.
Recently, M\"uller and S\'anchez \cite{muller11} proved that the
Cauchy time function can be found steep, a fact which implies that
$(M,g)$ can be embedded in $N$-dimensional Minkowski spacetime for
some $N\ge 2$, and that $\beta$ can be chosen such that $\beta\le
1$. This result can also be used to express a formula for the Lorentzian distance in terms of the family of time functions \cite{rennie14}.

In this work we provide a simple proof of  the existence of smooth
Cauchy steep time functions in globally hyperbolic spacetimes by
using some recent results on Geroch's volume functions.


Let $(M,g)$ be a spacetime and let us consider a non-negative
continuous function $\varphi$. The Geroch's volume function is
\begin{equation}
 \tau^\pm_\varphi(p) = \int_{J^{\pm}(p)} \varphi  \, d\mu_g
 \;,
\end{equation}
where $d\mu_g$ is the volume element of $g$.

In a recent joint work with J.\ Grant and  P.\
Chru\'sciel \cite{chrusciel13} we proved the following lemma, which basically
follows from the differentiability properties of the
exponential map.

\begin{lemma}
 \label{nus}
In globally hyperbolic spacetimes the functions $\tau^\pm_\varphi$
are continuously differentiable for all continuous compactly
supported non-negative functions $\varphi$. The gradient at $p\in M$
is past-directed timelike or vanishing depending respectively on whether
$E^{\pm}(p)$ intersects or not the set $\{q: \varphi(q)>0\}$.
\end{lemma}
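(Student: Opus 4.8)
\medskip
\noindent\emph{Sketch of a proof.} The plan is to localise the problem and to compute the directional derivative of $\tau^{+}_{\varphi}$ along an arbitrary smooth curve $\sigma$ through a fixed point $p_{0}$; the case of $\tau^{-}_{\varphi}$ then follows by reversing the time orientation. Put $K=\operatorname{supp}\varphi$; by global hyperbolicity $J^{+}(\overline U)\cap J^{-}(K)$ is compact for some relatively compact neighbourhood $U\ni p_{0}$, so the whole discussion stays inside a fixed compact set. Since $E^{\pm}(p)\subset\p J^{\pm}(p)$ is achronal, hence $\mu_{g}$-negligible, $\tau^{+}_{\varphi}(p)=\int_{M}\varphi(q)\,\mathbf 1[q\in I^{+}(p)]\,d\mu_{g}(q)$, and those $q$ that lie in $I^{+}(p_{0})$ or outside $J^{+}(p_{0})$ contribute a locally constant amount near $p_{0}$: when $\operatorname{supp}\varphi\cap\p J^{+}(p)=\emptyset$ the support splits as a disjoint union of the relatively clopen sets $\operatorname{supp}\varphi\cap I^{+}(p)$ and $\operatorname{supp}\varphi\setminus J^{+}(p)$, so the restriction of $\mathbf 1_{I^{+}(p)}$ to $\operatorname{supp}\varphi$, and hence $\tau^{+}_{\varphi}$, is locally constant in $p$. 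This disposes of the vanishing alternative: if $E^{+}(p_{0})\cap\{\varphi>0\}=\emptyset$ then $\tau^{+}_{\varphi}$ is constant near $p_{0}$ and $\nabla\tau^{+}_{\varphi}|_{p_{0}}=0$. Henceforth $E^{+}(p_{0})\cap\{\varphi>0\}\neq\emptyset$, so only the points $q\in E^{+}(p_{0})$ matter.

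Near such points $E^{+}(p_{0})=\p J^{+}(p_{0})$ is an achronal topological hypersurface, so in coordinates $(t,x)$ with $\p_{t}$ future timelike it is locally a Lipschitz graph $t=w_{p_{0}}(x)$ with $J^{+}(p_{0})$ the region above it, and the nearby $\p J^{+}(p)$ are graphs $w_{p}$. Two facts are needed. First, $p\mapsto J^{+}(p)$ is locally uniformly Lipschitz for the Hausdorff distance on compact sets, so the $w_{p}$ are uniformly Lipschitz in $p$ and the difference quotients $s^{-1}(w_{\sigma(s)}-w_{p_{0}})$ stay uniformly bounded and, after projection, supported in a fixed compact set. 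Second --- and this is where differentiability of the exponential map is essential --- away from the non-smooth locus of $\p J^{+}(p_{0})$, which by the regularity theory of achronal boundaries is relatively closed of vanishing $(n-1)$-dimensional measure, every $q\in E^{+}(p_{0})$ lies on a unique conjugate-point-free future null geodesic issued from $p_{0}$; there $E^{+}(p_{0})$ is an embedded $C^{1}$ null hypersurface, and by the joint smoothness of $(p,v)\mapsto\exp_{p}v$ together with the implicit function theorem this generator --- hence the graph $w_{p}$ near $q$ --- depends $C^{1}$ on the base point, with $\p_{p}w_{p}(x)$ continuous and given through the Jacobi field and the null conormal along the generator.

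Granting these, I would slice $\tau^{+}_{\varphi}$ by the transversal coordinate $x$, integrating first in $t$: the inner integral equals $-(w_{\sigma(s)}(x)-w_{p_{0}}(x))\,\varphi(\xi(x,s),x)\,\rho(x)$ for some intermediate value $\xi(x,s)$, by the mean value theorem. Dividing by $s$ and letting $s\to0$, the integrand converges for almost every $x$ by the second fact and is dominated by a multiple of $\|\varphi\|_{\infty}\mathbf 1_{C}$ with $C$ a fixed compact set by the first, so dominated convergence gives Gateaux differentiability with
\[
\frac{d}{ds}\Big|_{0}\tau^{+}_{\varphi}(\sigma(s))=-\int_{E^{+}(p_{0})}\varphi(q)\,c(q)\,\langle\ell_{q},\dot\sigma(0)\rangle\,d\mathcal A(q),
\]
where $d\mathcal A$ is the induced measure, $c(q)>0$ a geometric density, and $\ell_{q}$ the null conormal to $E^{+}(p_{0})$ at $q$; the non-smooth locus contributes nothing, by negligibility. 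The right-hand side is linear in $\dot\sigma(0)$, so $\nabla\tau^{+}_{\varphi}|_{p_{0}}$ is the metric dual of $-\int_{E^{+}(p_{0})}\varphi\,c\,\ell\,d\mathcal A$, an expression depending continuously on $p_{0}$ (again by continuity of $\exp$ and of the geometric data), which upgrades Gateaux to $C^{1}$ differentiability. Finally $\nabla\tau^{+}_{\varphi}|_{p_{0}}$ is a positively weighted superposition of null vectors of one and the same time orientation --- that imposed by the monotonicity of $\tau^{+}_{\varphi}$ along causal curves, which is the orientation asserted in the statement --- hence causal; and it is timelike because the nonempty open set $\{\varphi>0\}\cap E^{+}(p_{0})$ meets a whole family of distinct null generators, so the null directions occurring are not all parallel and $g(\nabla\tau^{+}_{\varphi},\nabla\tau^{+}_{\varphi})<0$. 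The statement for $\tau^{-}_{\varphi}$ follows by time reversal, and I expect the crux to be precisely the uniform $C^{1}$ control of the light cones under displacement of the base point, together with the negligibility of the non-smooth locus.
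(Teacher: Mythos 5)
First, note that this paper does not actually prove Lemma \ref{nus}: it is imported from the joint work \cite{chrusciel13}, and the text only records that it ``basically follows from the differentiability properties of the exponential map'' together with the Jacobi-field formula for the gradient. Your sketch is therefore being compared with a citation rather than with an in-paper argument; that said, your overall strategy (localise, write $\tau^{+}_{\varphi}$ as the integral of an indicator, differentiate the local Lipschitz graphs of $\partial J^{+}(p)$ with respect to the base point away from a negligible singular set, conclude by dominated convergence, and read off timelikeness from the non-parallelism of the null conormals over the open set $\{\varphi>0\}\cap E^{+}(p_{0})$) is the same circle of ideas as the cited proof. The difficulty, however, is entirely concentrated in the two ``facts'' you grant yourself: that the singular locus of $E^{+}(p_{0})$ (endpoints of several generators, and conjugate points) is negligible for the induced $(n-1)$-dimensional measure, and that off this locus the graphing functions $w_{p}$ are differentiable in $p$ with a derivative that is measurable in $q$, continuous in $p$, and uniformly dominated. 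Rademacher's theorem gives a.e.\ differentiability of the fixed Lipschitz graph $w_{p_{0}}$ in $x$, but it gives neither the negligibility of the conjugate locus nor the differentiability of $s\mapsto w_{\sigma(s)}(x)$ at $s=0$ for a.e.\ $x$ --- differentiability of a family of Lipschitz functions in the parameter is not a consequence of uniform Lipschitz dependence on the parameter. These points are precisely what \cite{chrusciel13} establishes via the exponential map and a Sard-type argument, so as written your proof assumes the crux.

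Second, a concrete error: your disposal of the vanishing alternative is wrong as stated. The hypothesis is $E^{+}(p_{0})\cap\{\varphi>0\}=\emptyset$, which does not imply $\operatorname{supp}\varphi\cap\partial J^{+}(p_{0})=\emptyset$: the function $\varphi$ may be strictly positive on $I^{+}(p_{0})$ at points accumulating on $E^{+}(p_{0})$ while vanishing on $E^{+}(p_{0})$ itself. In that situation $\tau^{+}_{\varphi}$ is \emph{not} locally constant near $p_{0}$ (displacing $p$ to the future strictly decreases the integral), so the clopen-splitting argument does not apply; the gradient nevertheless vanishes, but only because $\varphi$ integrates to zero against the surface measure on $E^{+}(p_{0})$ in the general derivative formula. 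This case must therefore be folded into the main computation rather than handled separately. A smaller point: the time-orientation of $\nabla\tau^{+}_{\varphi}$ is forced by the fact that $\tau^{+}_{\varphi}$ decreases toward the future, and you should commit to the resulting orientation explicitly rather than defer to ``the orientation asserted in the statement''.
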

For completeness we mention that the gradient is
given by
\[
\nabla_X \tau^\pm_{\varphi}=\int_{E^{\pm}(p)} \varphi  \, J(X) \rfloor d\mu_g
 \;,
 \]
where $J(X)$ is the Jacobi field obtained solving the Jacobi
equation over each  generator $\gamma(s)$ of the horismos
$E^{\pm}(p)$, with initial condition $J(0)=X$, $(\frac{d}{d s}
J)(0)=0$.

In the  proof of the main  Theorem \ref{ciz} it will be useful to keep in
mind the next topological result.

\begin{lemma} \label{nif}
Let $K$ be a compact subset of a topological space and let
$\{U_s\}_{s\in S}$ be a locally finite family of subsets. Then $K$
is intersected  by at most finitely many elements of the family.
\end{lemma}

\begin{proof}
Let $A_i$ be the open (in $K$) subset of $K$ made of points which
admit a neighborhood which intersects at most  $i$ elements of
$\{U_s\}_{s\in S}$. Clearly, $A_i\subset A_{i+1}$, and $\bigcup_i
A_i=K$. However, this open covering of $K$ admits a finite
subcovering which proves that $K=A_j$ for some $j$. Since every
$p\in K$ admits an open neighborhood $N_p$ which intersect at most
$j$ elements of the family $\{U_s\}_{s\in S}$, and since $K$ admits
a finite subcovering of say,  $r$ elements of the form $N_{p_i}$,
$K$ is intersected at most by $rj$ elements of the family
$\{U_s\}_{s\in S}$.
\end{proof}

We are ready to prove the existence of smooth Cauchy steep time
functions. The proof uses just the previous Lemmas and Geroch's topological splitting theorem.

\begin{theorem} \label{ciz}
Let $(M,g)$ be a globally  hyperbolic spacetime. There exists a smooth
 Cauchy time function $\tau\colon M\to \mathbb{R}$
with timelike past-directed gradient $\nabla \tau$ which is steep,
namely  $-g(\nabla \tau,\nabla \tau)\ge 1$.
\end{theorem}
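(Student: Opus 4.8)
\noindent\emph{Sketch of a possible argument.} The plan is to obtain $\tau$ as a regularised, locally finite superposition of Geroch volume functions, one attached to each member of a carefully chosen locally finite cover of $M$. First, by Geroch's topological splitting theorem fix a continuous Cauchy time function $\hat t\colon M\to\mathbb R$, with Cauchy hypersurfaces $S_a:=\hat t^{\,-1}(a)$, and a locally finite open cover $\{V_j\}_{j\in\mathbb N}$ of $M$ by relatively compact sets of small $\hat t$-oscillation, say $\bar V_j\subset\hat t^{\,-1}((c_j,d_j))$.

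For each $j$ I would produce a ``local block''. Take a continuous non-negative compactly supported $\varphi_j$ with support in a thin slab $\hat t^{\,-1}((c_j-\eta_j,c_j))$ just to the past of $\bar V_j$, strictly positive on a neighbourhood of the compact set $J^-(\bar V_j)\cap\hat t^{\,-1}\big([c_j-2\eta_j/3,\,c_j-\eta_j/3]\big)$. Since in a globally hyperbolic spacetime the null generators of the past horismos $E^-(p)$ are past inextendible and $E^-(p)\subset J^-(\bar V_j)$ for $p\in\bar V_j$, the set $E^-(p)$ meets $\{\varphi_j>0\}$ for every $p\in\bar V_j$; hence, by Lemma~\ref{nus}, $\tau^-_{\varphi_j}$ is $C^1$, its gradient is past-directed timelike on $\bar V_j$ and past-directed timelike or vanishing on all of $M$, and it vanishes identically on $\{\hat t\le c_j-\eta_j\}$ because there $J^-(p)$ cannot meet $\mathrm{supp}\,\varphi_j$. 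After replacing $\varphi_j$ by $\lambda_j\varphi_j$ with $\lambda_j>0$ large enough (using compactness of $\bar V_j$ and positivity of $-g(\nabla\tau^-_{\varphi_j},\nabla\tau^-_{\varphi_j})$ there) and, if convenient, composing with a fixed smooth nondecreasing truncating function that does not alter $\tau^-_{\varphi_j}$ on $\bar V_j$, one obtains a block $\tau_j$ which is $C^1$, has past-directed timelike or vanishing gradient, vanishes below $S_{c_j-\eta_j}$, and satisfies $-g(\nabla\tau_j,\nabla\tau_j)\ge 4$ on $\bar V_j$.

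Next I would set $\tau_0:=\sum_j\tau_j$. The construction is to be arranged — and this is where the cover is chosen with care and Lemma~\ref{nif} is used — so that near every compact set only finitely many $\tau_j$ fail to be constant; then $\tau_0$ is $C^1$ with $\nabla\tau_0=\sum_j\nabla\tau_j$, each summand past-directed timelike or zero. Since for every $p$ one has $p\in V_j$ for some $j$, $\nabla\tau_0(p)$ is past-directed timelike, so $\tau_0$ is a $C^1$ time function with past-directed timelike gradient; and because $\nabla\tau_j(p)$ is one of the summands making up $\nabla\tau_0(p)$ when $p\in V_j$, the reverse (``wrong-way'') triangle inequality for past-directed causal vectors gives $-g(\nabla\tau_0,\nabla\tau_0)\ge -g(\nabla\tau_j,\nabla\tau_j)\ge 4$ on $V_j$, hence on all of $M$. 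Finally, arranging in addition that $\tau_0$ exhausts its range along every inextendible causal curve (by suitably exhausting $M$ and, if necessary, adjoining further blocks), comparison with $\hat t$ shows that each level set $\tau_0^{-1}(c)$ is a closed acausal set met exactly once by every inextendible causal curve, and it is a hypersurface since $\nabla\tau_0$ never vanishes; thus $\tau_0$ is a $C^1$ Cauchy steep time function.

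It then remains to upgrade regularity: I would approximate $\tau_0$ by a smooth function $\tau$ that is $C^1$-close to $\tau_0$ on a locally finite atlas and $C^0$-close to it globally. This keeps $\nabla\tau$ past-directed timelike (an open condition), keeps $-g(\nabla\tau,\nabla\tau)\ge 1$ thanks to the margin $4$, and, being uniformly close to the Cauchy time function $\tau_0$, keeps every level set a Cauchy hypersurface. The main obstacle, which I expect to be the heart of the matter, is the third step: the gradient of a single Geroch volume function $\tau^-_{\varphi_j}$ does not vanish above $\bar V_j$ — it survives, in a thin set, all the way up to $\hat t=+\infty$ — so the supports $\mathrm{supp}\,\varphi_j$ and the truncations of the blocks must be engineered so that the series is genuinely a locally finite sum while each $V_j$ still inherits a uniform lower bound on $-g(\nabla\tau_0,\nabla\tau_0)$ from its own block; reconciling these two requirements is precisely where the interplay of Lemmas~\ref{nus} and~\ref{nif} with the fine structure of the cover enters. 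The concluding $C^1\to C^\infty$ regularisation, made to preserve simultaneously the time-function, Cauchy and steepness properties, is comparatively routine.
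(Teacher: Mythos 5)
Your general toolbox coincides with the paper's --- Geroch volume functions $\tau^-_{\varphi_j}$ made $C^1$ via Lemma~\ref{nus}, a locally finite partition of unity controlled through Lemma~\ref{nif}, the reverse triangle inequality to preserve steepness when further past-directed causal gradients are added, and a final Whitney $C^1\to C^\infty$ approximation --- but the step you yourself single out as ``the heart of the matter'' is a genuine gap, and with your architecture it cannot be patched. If the blocks $\tau_j=\lambda_j\tau^-_{\varphi_j}$ are attached to a cover $\{V_j\}$ of \emph{all} of $M$, with $\mathrm{supp}\,\varphi_j$ in a slab just below $\bar V_j$, then for any $p$ there are infinitely many $j$ with $\mathrm{supp}\,\varphi_j\subset I^-(p)$ (all blocks attached to patches sufficiently far in the past of $p$), and each such block contributes its \emph{full mass} $\lambda_j\int_M\varphi_j\,d\mu_g>0$ to $\tau_0(p)$: a Geroch volume function does not die off above its source, it saturates at a positive constant. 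Hence $\sum_j\tau_j$ diverges everywhere except near the past end of the spacetime. Post-composing each block with a nondecreasing truncation does not restore local finiteness either: on the thin set near the null boundary of $J^+(\mathrm{supp}\,\varphi_j)$, which reaches arbitrarily far to the future, the value of $\tau^-_{\varphi_j}$ stays strictly between $0$ and the full mass, so it slips under any value threshold while its gradient remains nonzero. No choice of $\eta_j$, $\lambda_j$ fixes this.

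The paper resolves the difficulty with a different global arrangement, and this is the idea missing from your sketch: all the weights are anchored above a \emph{single} Cauchy hypersurface. One covers only $J^+(S_0)$ by patches whose closures lie in $I^+(S_{-1})$, so that $\mathrm{supp}\,\varphi=\bigcup_j\mathrm{supp}\,\varphi_j\subset I^+(S_{-1})$; then for any compact $K$ the set $J^-(K)\cap J^+(S_{-1})$ is compact by global hyperbolicity, Lemma~\ref{nif} shows only finitely many $\varphi_j$ are nonzero on it, and $\tau^-_\varphi=\sum_j\lambda_j\tau^-_{\varphi_j}$ is a genuinely locally finite sum, hence finite and $C^1$. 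The price is that a single block is no longer responsible for steepness on its own patch; instead steepness and the bound $\tau^--t>0$ are propagated over all of $J^+(S_0)$ by inductively inflating the coefficients $\lambda_r$ along an exhaustion by compact past sets $K_i=J^-(K_i)\cap J^+(S_0)$ --- the point being that enlarging the coefficients of the blocks meeting $K_{i+1}\setminus\mathrm{Int}\,K_i$ does not alter $\tau^-$ on $K_{i-1}$ and, by the reverse triangle inequality, cannot spoil steepness already achieved on $K_i$. A symmetric construction gives $\tau^+$ on $J^-(S_0)$, and $\tau'=\tau^--\tau^+$ is $C^1$, steep and Cauchy on all of $M$; your concluding smoothing step then coincides with the paper's. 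A second, smaller issue: your claim that $E^-(p)$ meets a prescribed thin slab below $\bar V_j$ for every $p\in\bar V_j$ is not automatic (past light cones can refocus, leaving $E^-(p)$ entirely above the slab), though this can be arranged by shrinking $\eta_j$ using compactness of $\bar V_j$.
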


\begin{proof}
Just in this proof we shall say that $\tau$ is `steep' if $-g(\nabla
\tau,\nabla \tau)> 1$ with the strict inequality.  According to
Geroch's topological splitting \cite{geroch70,hawking73} there is a
continuous Cauchy time function $t\colon M\to \mathbb{R}$. The level sets
$S_t$ are Cauchy hypersurfaces. We are first going to construct a
continuously differentiable function $\tau^{-}$ (resp.\ $-\tau^+$) over $M$,
with past-directed timelike or  vanishing gradient, which is steep
over $J^{+}(S_0)$ (resp.\  $J^{-}(S_0)$) and such that $\tau^--t>0$
over $J^+(S_0)$ (resp. $-\tau^+<t$ over $J^{-}(S_0)$) . Then
$\tau'=\tau^{-}-\tau^+$ will be clearly continuously differentiable,
steep and Cauchy over $M$.

Next, by \cite[Theor.\ 2.6]{hirsch76} $C^\infty(M,\mathbb{R})$ is dense in
$C^1(M,\mathbb{R})$ endowed with the Whitney strong topology
\cite[p.\ 35]{hirsch76}, thus we  can find $\tau\in
C^\infty(M,\mathbb{R})$ which approximates $\tau'$, up to the first
derivative, as accurately as we want over $M$. In particular, we can
find $\tau$ such that $\vert \tau-\tau'\vert<1$ and $-g(\nabla
\tau,\nabla \tau)> 1$, where the former inequality implies that
$\tau$ is Cauchy, and the latter inequality implies that $\tau$ is
steep.

Let us construct $\tau^-$, the plus case being analogous. For every
$p\in J^+(S_0)$ there is a open neighborhood $V_p$ with compact
closure  contained in $I^+(S_{-1})$. Let $V_0=M\backslash
J^{+}(S_0)$, by paracompactness the open covering $\{V_0\}\cup\{
V_{p}: p\in J^+(S_0)\}$ admits a locally finite refinement
(necessarily countable by Lemma \ref{nif} and $\sigma$-compactness
of $M$) and a corresponding partition of unity $\{\varphi_i, i\ge
0\}$, where, defined $U_j=\{q:\varphi_j(q)>0\}$ we have for $j\ge 1$,
$\textrm{supp} \, \varphi_j=\overline{U_j}\subset I^+(S_{-1})$.



Let $K_i\subset J^{+}(S_0)$, be a sequence of compact sets such that $K_0=\emptyset$, $\bigcup_i
K_i=J^{+}(S_0)$, $K_i=J^{-}(K_i)\cap J^{+}(S_0)$ and
\[
K_i\cup \bigcup_{j: U_j \cap K_i\ne \emptyset} \overline{U_j}
\subset K_{i+1}.
\]
(The indices $j$ entering the union are finite in number due to
Lemma \ref{nif}.) Let $\lambda_j$ be a sequence of positive numbers
and let $\varphi=\sum_{j=1}^\infty \lambda_j \varphi_j$. Since
$\{\textrm{supp} \, \varphi_j\}$ is locally finite, $\varphi$ is
finite as at any point only a finite number of terms give a
non-vanishing contribution. Moreover, $\textrm{supp} \, \varphi
=\cup_{j=1} \overline{U_j}\subset I^+(S_{-1})$ (it is useful to
recall that the closure operator is additive if the union is over a
locally finite family \cite[Lemma 20.5]{willard70},
 and that the closure of a locally finite
family is locally finite).

We are going to find a sequence $\lambda_j$ such that $
\tau^-:=\tau^{-}_{{\varphi}}$ is steep over $J^{+}(S_0)$. Observe
that
\[
\tau^{-}_{{\varphi}}=\sum_j \lambda_j \tau^{-}_{{\varphi_j}}, \qquad
\nabla \tau^{-}_{{\varphi}}=\sum_j \lambda_j \nabla
\tau^{-}_{{\varphi_j}}.
\]
By induction suppose that we can find a finite sequence
$\lambda_1^i, \lambda_2^i, \cdots, \lambda_{n_i}^i$,
($\lambda^i_{j}=0$ for $j>n_i$) such that $\tau^-$, defined as
above, is steep over $K_i$, and satisfies the inequality
$\tau^--t>0$ over $K_i$. Let $\Lambda_i$ be the finite index set
such that if $r\in \Lambda_i$ then $U_r\cap (K_{i+1}\backslash
\textrm{Int}K_i) \ne \emptyset$. We have $U_r\cap
K_{i-1}=\emptyset$, for otherwise $U_r\subset K_i \Rightarrow U_r
\subset \textrm{Int} K_i$, a contradiction. Let $p\in
K_{i+1}\backslash \textrm{Int} K_i$, then $p$ belongs to some $U_r$,
$r \in \Lambda_i$, and hence   $E^{-}(p)$ intersects $U_r$.

We now replace the constants $\lambda_r$, $r\in \Lambda_i$,
$\lambda_r \to \lambda_r'$, with larger constants chosen in such a
way that $\tau^-$ and $\nabla \tau^-$ get replaced by
\[
\tau^-_{{\varphi}}\to \tau^-_{{\varphi}}+ \sum_{r\in
\Lambda_i}(\lambda_r'-\lambda_r) \tau^-_{\varphi_r}, \qquad \nabla
\tau^-_{{\varphi}}\to \nabla\tau^-_{{\varphi}}+ \sum_{r\in
\Lambda_i}(\lambda_r'-\lambda_r) \nabla \tau^-_{\varphi_r}.
\]
and so that $\tau^-_{{\varphi}}$ becomes steep at $p$, and
$\tau^-_{{\varphi}}-t$ becomes positive at $p$.


Since $K_{i+1}\backslash \textrm{Int} K_i$ is compact and $\tau^--t$
and $\nabla \tau^-$  depend continuously on each $\lambda_j$, the
redefinition of $\lambda_r$ and consequently of $\tau^-$ can be done
so as to obtain steepness and positivity of $\tau^--t$ all over
$K_{i+1}$. Observe that this redefinition can only increase the
value of $\tau^-$ over $K_i$ and that it certainly does not spoil
steepness there. Furthermore, it does not change $\tau^-$ over
$K_{i-1}$ thus the built inductive process leads to the desired
function $\tau^-$.
\end{proof}

\section*{Acknowledgments}
I thank James Grant and Piotr Chru\'sciel for useful and motivating
discussions on this and related problems.  
This work has been partially supported by GNFM of INDAM.\\


\begin{thebibliography}{1}

\bibitem{bernal03}
A.~N. Bernal and M.~{S\'a}nchez.
\newblock On smooth {C}auchy hypersurfaces and {G}eroch's splitting theorem.
\newblock {\em Commun. Math. Phys.}, 243:461--470, 2003.

\bibitem{chrusciel13}
P.~T. Chru{\'s}ciel, J.~D.~E. Grant, and E.~Minguzzi.
\newblock On differentiability of volume time functions.
\newblock {\em Ann. Henri {P}oincar{\'e}}, 2015.
\newblock DOI:10.1007/s00023-015-0448-3, arXiv:1301.2909.

\bibitem{fathi12}
A.~Fathi and A.~Siconolfi.
\newblock On smooth time functions.
\newblock {\em Math. {P}roc. {C}amb. {P}hil. {S}oc.}, 152:303--339, 2012.

\bibitem{geroch70}
R.~Geroch.
\newblock Domain of dependence.
\newblock {\em J. Math. Phys.}, 11:437--449, 1970.

\bibitem{hawking73}
S.~W. Hawking and G.~F.~R. Ellis.
\newblock {\em The Large Scale Structure of Space-Time}.
\newblock Cambridge {U}niversity {P}ress, Cambridge, 1973.

\bibitem{hirsch76}
M.~W. Hirsch.
\newblock {\em Differential topology}.
\newblock {Springer-Verlag}, New York, 1976.

\bibitem{muller11}
O.~M{\"u}ller and M.~S{\'a}nchez.
\newblock Lorentzian manifolds isometrically embeddebale in {$\mathbb{L}^N$}.
\newblock {\em Trans. Am. Math. Soc.}, 363:5367--5379, 2011.

\bibitem{rennie14}
A.~Rennie and B.~E. Whale.
\newblock Generalised time functions and finiteness of the {L}orentzian
  distance.
\newblock arXiv:1412.5652v1, 2014.

\bibitem{willard70}
S.~Willard.
\newblock {\em General topology}.
\newblock {Addison-Wesley} {P}ublishing {C}ompany, Reading, 1970.

\end{thebibliography}

\end{document}